%
\documentclass{llncs}
\usepackage{makeidx}  
\usepackage{etex}
\usepackage{amsmath}
\usepackage{amssymb}
\usepackage[colorlinks = true]{hyperref}
\usepackage[normalem]{ulem}
\usepackage[margin = 2.5cm]{geometry}
\usepackage[english]{babel}
\usepackage[utf8x]{inputenc}
\usepackage{amsmath}
\usepackage{tikz}
\usepackage{graphicx}
\usepackage[usenames,dvipsnames]{pstricks}
\usepackage{epsfig}
\usepackage{pst-grad} 
\usepackage{pst-plot} 
\usepackage{epstopdf}
\usepackage{caption}
\usepackage{subcaption}
\usepackage{hyperref}
\usepackage[ruled,linesnumbered]{algorithm2e}
\usepackage[noend]{algpseudocode}
\makeatletter
\def\BState{\State\hskip-\ALG@thistlm}
\makeatother
\newcommand\Fontvi{\fontsize{7.5}{8.5}\selectfont}

\begin{document}
%
%
%

%
%
\title{Linear time algorithm to check the singularity of block graphs
\thanks{This work is supported by the Joint NSFC-ISF Research Program (jointly funded by
the National Natural Science Foundation of China and the Israel Science Foundation (Nos.
11561141001, 2219/15).}}

\author{Ranveer Singh \inst{1} \and Naomi Shaked-Monderer\inst{2} \and Abraham Berman\inst{3}}

\authorrunning{Ranveer Singh et al.} 

\tocauthor{Ranveer Singh, Naomi Shaked-Monderer, Avi Berman}

\institute{Technion-Israel Institute of Technology, Haifa 32000, Israel \\
\email{singh@technion.ac.il},
\and The Max Stern Yezreel Valley College, Yezreel Valley 19300, Israel\\
\email{nomi@technion.ac.il},
\and
Technion-Israel Institute of Technology, Haifa 32000, Israel\\ \email{berman@technion.ac.il}
}

\maketitle              

\begin{abstract}
A block graph is a graph in which every block is a complete graph. Let $G$ be a block graph and let $A(G)$ be its (0,1)-adjacency matrix. Graph $G$ is called nonsingular (singular) if $A(G)$ is nonsingular (singular).  Characterizing nonsingular block graphs is an interesting open problem proposed by Bapat and Roy in 2013. In this article, we give a linear time algorithm to check whether a given block graph is singular or not. 
\keywords{block, block graph, nonsingular graph, nullity}
\end{abstract}
AMS Subject Classifications. 15A15, 05C05.
  
\section{Introduction}  
Let $G=(V(G),E(G))$ be a undirected graph with vertex set $V(G)$ and edge set $E(G)$. Let the cardinality $|V(G)|$ (also called \emph{the order of $G$}) be equal to $n$. The adjacency matrix $A(G)=(a_{ij})$ of $G$ is the square matrix of order $n$ defined by $$a_{ij}=\begin{cases}
1 & \mbox{if the vertices $i,j$ are connected by an edge},\\
0 & \mbox{if $i=j$ or $i,j$ are not connected by an edge},
\end{cases} $$ where $1\leq i,j\leq n.$ A graph $G$ is called \emph{nonsingular (singular)} if $A(G)$ is \emph{nonsingular (singular)}, that is, the determinant of $A(G)$ is nonzero (zero). The \emph{rank} of $G$, denoted by $r(G),$ is the rank of the adjacency matrix $A(G)$. If $G$ has full rank, that is, $r(G)=n$, then $G$ is nonsingular, otherwise, it is singular. The \emph{nullity} of $G$, denoted by $\eta(G),$ is equal to the number of zero eigenvalues of $A(G)$. By the rank-nullity theorem $r(G)+\eta(G)=n.$  Thus a zero nullity of a graph $G$ implies that it is nonsingular while a positive nullity implies that it is singular. A \emph{cut-vertex} of $G$ is a vertex whose removal results in an increase in the number
of connected components. Let $G_1=(V_1, E_1)$ and $G_2=(V_2 ,E_2)$ be graphs on disjoint sets
of vertices. Their \emph{disjoint union} $G_1+G_2$ is the graph $G_1+G_2=(V_1\cup V_2 , E_1\cup E_2).$
A \emph{coalescence} of graphs $G_1$ and $G_2$ is any graph obtained from
the disjoint union $G_1+G_2$ by identifying a vertex $v_1$ of $G_1$ with a vertex $v_2$ of $G_2$, that is, merging $v_1, v_2$ into
a single vertex $v$. If $v_1$, and $v_2$ have loops of weights $\alpha_1, \alpha_2,$ respectively, then in the coalescence $v$ will have a loop of weight $(\alpha_1+\alpha_2)$.

A \emph{block} in a graph is a maximal connected subgraph that has no cut-vertex (\cite{west2001introduction}, p.15). Note that if a connected graph has no cut-vertex, then it itself is a block. The blocks and cut-vertices in any graph can found in linear time \cite{hopcroft1971efficient}. A block in a graph $G$ is called a \emph{pendant block} if it has one cut-vertex of $G$. Two blocks are called \emph{adjacent blocks} if they share a cut-vertex of $G$. A complete graph on $n$ vertices is denoted by $K_n$. If every block of a graph is a complete graph then it is called a \emph{block graph}. An example of block graph is given in Figure \ref{r2fig}(a), where the blocks are the induced subgraph on the vertex-sets $\{3, 5, 6, 7\}, \{1, 4, 3, 2\}, \{2, {11}, {12}\}$, and $\{1, {10}, 8, 9\}$. For more details on the block graphs see (\cite{bapat2014graphs}, Chapter 7). 

A well-known problem, proposed in 1957, by Collatz and Sinogowitz, is to characterize graphs
with positive nullity \cite{von1957spektren}.  Nullity of graphs is applicable in various branches of science, in particular, quantum chemistry, H\"{u}ckel
molecular orbital theory \cite{gutman2011nullity,lee1994chemical} and social networks theory \cite{leskovec2010signed}.  There has been significant work on the nullity of undirected graphs like trees \cite{cvetkovic1980spectra,fiorini2005trees,gutman2001nullity},
unicyclic graphs \cite{hu2008nullity,nath2007null,hu2008nullity}, and bicyclic graphs \cite{bapat2011note,berman2008upper,xuezhong2005nullity}. It is well-known that a tree is nonsingular if and only if it has a perfect matching. Since a tree is a block graph, it is natural to investigate in general which block graphs are nonsingular. The combinatorial formulae for the determinant of block graphs are known in terms of size and arrangements of its blocks \cite{bapat2014adjacency,singh2017characteristic,singh2017mathcal}. But, due to a myriad of the possibility of sizes and arrangements of the blocks it is difficult in general to classify nonsingular or singular block graphs. In \cite{singh2018nonsingular}, some classes of singular and nonsingular block graphs are given. However, it is still an open problem. In this paper, we give a linear time algorithm to determine whether a given graph is singular or not.

Thanks to the eminent computer scientist Alan Turing, the matrix determinant can be computed in the polynomial time using the $LUP$ decomposition of the matrix, where $L$ is a lower triangular matrix, $U$ is an upper triangular matrix, $P$ is a permutation matrix. Interestingly, the asymptotic complexity of the matrix determinant is the same as that of matrix multiplication of two matrices of the same order. The theorem which relates the complexity of matrix product and the matrix determinant is as follows. 

\begin{theorem}\cite{aho1974design} \label{md}
Let $M(n)$ be the time required to multiply two $n\times n$ matrices over some ring, and $A$ be an $n\times n$ matrix. Then, we can compute the determinant of $A$ in $O(M(n))$ steps.
\end{theorem}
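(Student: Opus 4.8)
The plan is to reduce the computation of $\det A$ to a sequence of matrix multiplications and inversions of smaller matrices via the block (Schur complement) factorization, and then to show that the resulting recurrence collapses to $O(M(n))$. Assume for simplicity that $n$ is a power of two and partition
$$A = \begin{pmatrix} A_{11} & A_{12} \\ A_{21} & A_{22} \end{pmatrix}$$
into four $\tfrac{n}{2}\times\tfrac{n}{2}$ blocks. If $A_{11}$ is invertible, then writing $S = A_{22} - A_{21}A_{11}^{-1}A_{12}$ for its Schur complement, one has the factorization
$$A = \begin{pmatrix} I & 0 \\ A_{21}A_{11}^{-1} & I \end{pmatrix} \begin{pmatrix} A_{11} & 0 \\ 0 & S \end{pmatrix} \begin{pmatrix} I & A_{11}^{-1}A_{12} \\ 0 & I \end{pmatrix},$$
from which $\det A = \det A_{11}\cdot\det S$ follows immediately, since the outer factors are unit triangular.

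Next I would read off the recurrence. Forming $S$ costs $O(M(n))$: it requires $A_{11}^{-1}$, two half-size products, and one subtraction. Computing $\det A_{11}$ and $\det S$ are two determinant problems of size $\tfrac{n}{2}$. Hence, if $T(n)$ denotes the cost of the determinant and $I(n)$ that of inversion, then $T(n) = 2\,T(n/2) + I(n/2) + O(M(n))$. So I still need that inversion is cheap. Using the companion block formula
$$A^{-1} = \begin{pmatrix} A_{11}^{-1} + A_{11}^{-1}A_{12}S^{-1}A_{21}A_{11}^{-1} & -A_{11}^{-1}A_{12}S^{-1} \\ -S^{-1}A_{21}A_{11}^{-1} & S^{-1} \end{pmatrix},$$
inverting $A$ reduces to inverting $A_{11}$ and $S$ (two half-size inversions) plus a constant number of half-size products, giving $I(n) = 2\,I(n/2) + O(M(n))$.

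Now I would solve both recurrences. Because any multiplication algorithm must at least inspect all $n^2$ entries, $M(n) = \Omega(n^2)$, and under the standard regularity assumption that $M(n)/n^{2}$ is non-decreasing (so that, with $M(n)=\Theta(n^{\omega})$ and $\omega\ge 2$, one has $2\,M(n/2)\le \tfrac12 M(n)$), the geometric-type sum $\sum_{k\ge 0} 2^{k} M(n/2^{k})$ is dominated by its first term. Consequently both $I(n)=O(M(n))$ and $T(n)=O(M(n))$.

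The hard part will be removing the standing assumption that the leading block $A_{11}$, and every block encountered in the recursion, is invertible. When a pivot block is singular one must permute rows and columns to bring an invertible block into position — this is precisely the role of the permutation matrix $P$ in the $LUP$ decomposition — while tracking the sign each permutation contributes to $\det A$. Over a field such a pivot can always be found when $A$ is invertible, and a zero determinant is detected exactly when no admissible pivot exists; over a general ring, where division is unavailable, one instead invokes a division-free evaluation scheme that preserves the $O(M(n))$ bound. Handling this pivoting bookkeeping carefully, rather than the algebra of the Schur complement itself, is where the real work lies.
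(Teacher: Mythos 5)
The paper offers no proof of this statement at all: it is imported from Aho, Hopcroft and Ullman \cite{aho1974design}, whose argument runs through an $O(M(n))$ LUP decomposition. Your sketch follows that same classical route, and the parts you actually carry out are sound: the factorization $\det A=\det A_{11}\cdot\det S$ via the Schur complement, the inversion recurrence $I(n)=2I(n/2)+O(M(n))$, and the collapse of both recurrences to $O(M(n))$ under the regularity assumption that $M(n)/n^2$ is nondecreasing (so $2M(n/2)\le \tfrac12 M(n)$) are all correct. The problem is that you have deferred, rather than closed, the one step that constitutes the real content of the theorem. When the leading block $A_{11}$ (or any pivot block deeper in the recursion) is singular, your entire scaffolding --- the factorization, the determinant recurrence, and the inversion recurrence --- is unavailable, and you say only that one must ``permute rows and columns to bring an invertible block into position.'' Producing such a permutation within the $O(M(n))$ budget is precisely the hard part, and it cannot be done by searching over square sub-blocks. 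The AHU construction resolves it by recursing on \emph{rectangular} halves: one LUP-factors the top $\tfrac n2\times n$ slab, and the permutation returned by that recursive call is what exposes a nonsingular pivot for the bottom slab. Your square $2\times 2$ split has no mechanism for discovering the permutation, so the proposal as written fails exactly on the inputs the pivoting is meant to handle (and detecting $\det A=0$ when no pivot exists is part of the same machinery).

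A second, smaller overclaim: the statement (loosely phrased, following AHU) says ``over some ring,'' but your argument uses $A_{11}^{-1}$ and hence division, so its honest scope is a field; this matches AHU, whose LUP-based proof also requires division. Your escape hatch --- ``a division-free evaluation scheme that preserves the $O(M(n))$ bound'' --- is not known to exist: division-free determinant algorithms over a general commutative ring (Strassen's elimination of divisions, Bird's method, Kaltofen's algorithm) all cost at least roughly a factor of $\sqrt{n}$ or $n$ more than $M(n)$. So that sentence should be either deleted or replaced by an explicit restriction to fields, which is in any case all the present paper needs, since it applies the theorem only to adjacency matrices over $\mathbb{Q}$.
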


Rank, nullity is also calculated using $LUP$ decomposition, hence their complexities are also the same as that of the matrix determinant or product of two matrices of the same order. The fastest matrix product takes time of $O(n^{2.373})$ \cite{davie2013improved,williams2011breaking,le2014powers}.
 In this article, we give a linear time algorithm to check whether a given block graph $G$ of order $n$, is singular or nonsingular (full rank). The algorithm is linear in $n$. We will not make use of matrix product or elementary row/column operations during any stage of the computation, instead, we prove first results using elementary  row/column operation on matrices and use them to treat pendant blocks of the block graphs.

The rest of the article is organised as follows. In Section \ref{NP}, we mention some notations and preliminaries used in the article. In Section \ref{TA}, we give necessary and sufficient condition for any block graph to be nonsingular by  using prior results of elementary row/column operation  given in  Section \ref{NP}. In Section \ref{algo}, we provide an outline for Algorithm \ref{balgo} and provide some examples in support of it. A proof of the correctness of  Algorithm \ref{balgo} is given in Section \ref{pc}.

\section{Notations and preliminary results}\label{NP}
If $Q$ is a subgraph of graph $G$, then $G \setminus Q$ denotes the induced subgraph of $G$ on the vertex subset
$V(G) \setminus V(Q)$. Here, $V(G) \setminus V(Q)$ is the standard set-theoretic subtraction of vertex sets. If $Q$ consists of a 
single vertex $v$, then we write $G\setminus v$
for $G\setminus Q$.  $J,j,O,o$ denote the all-one matrix, all-one column vector, zero matrix, zero column vector of suitable order, respectively. $w$ denotes a $(0,1)$-column vector of suitable
 order. $diag(x_1,\hdots,x_n)$ denotes the diagonal matrix of order $n$, where the $i$-th diagonal entry is $x_i, i=1,\hdots,n$. If a graph has no vertices (hence no edges) we call it a \emph{void} graph.

\begin{theorem}\label{mtype}
Consider a matrix \begin{equation}
M=\begin{bmatrix}
x_1 & 1 & \hdots & 1\\ 1 & x_2 &\ddots &\vdots \\ \vdots & \ddots & \ddots  & 1\\1 &\hdots & 1& x_n  
\end{bmatrix}=J-D,
\end{equation} where $D=diag(1-x_1,\hdots,1-x_n).$

\begin{enumerate}
\item If exactly one of $x_1,\hdots, x_n$ is equal to 1, then $M$ is nonsingular.
\item If any two (or more) of $x_1,\hdots, x_n$ are equal to 1, then $M$ is singular.
\item  If $x_i\neq 1, i=1,\hdots,n$. Let $$S=\sum_{i=1}^n\frac{1}{1-x_i}.$$  Then 
\begin{enumerate}
\item $M$ is nonsingular if and only if $S \neq 1$.
\item if $S=1$ and $x_{n+1}$ is any real number, then the matrix

$$\begin{bmatrix}
M & j\\j^T & x_{n+1}
\end{bmatrix}$$ is nonsingular.  \end{enumerate}
\item If $M$ is nonsingular, then the matrix $$\tilde{M}=\begin{bmatrix}
 M & j \\ j^T & \alpha
 \end{bmatrix}$$ can be transformed to the following matrix using elementary row and column operations $$\begin{bmatrix}
 M & o \\ o^T & \alpha+\gamma
\end{bmatrix},$$ where
 \begin{equation}\label{valueofgamma}
 \gamma= \begin{cases}
 -\frac{S}{S-1} & \mbox{if $x_i\neq 1, i=1,\hdots n$},\\
 -1 & \mbox {if exactly one of $x_1,\hdots,x_n$ is equal to 1.}
\end{cases}  
 \end{equation}
\end{enumerate}
\end{theorem}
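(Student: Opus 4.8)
The unifying observation I would build on is that $M = J - D = jj^T - D$ is a rank-one perturbation of the diagonal matrix $-D = \operatorname{diag}(x_1-1,\dots,x_n-1)$, so the entire theorem reduces to understanding this one rank-one update together with its bordering. The plan is to treat parts 1--2 by direct row operations (the degenerate cases where $D$ is singular), part 3(a) by the matrix determinant lemma, part 3(b) by re-reading the bordered matrix as another instance of the theorem, and part 4 by a Schur-complement reduction whose bottom-right entry I evaluate explicitly.

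For parts 1 and 2, note that if $x_k=1$ then the $k$-th row of $M$ equals the all-ones row $j^T$. Hence if two distinct indices $k\neq l$ have $x_k=x_l=1$, rows $k$ and $l$ coincide and $M$ is singular, which is part 2. If exactly one index $k$ has $x_k=1$, I would subtract row $k$ from every other row $i$; row $i$ then becomes $(x_i-1)e_i$ while row $k$ stays $j^T$, and a permutation-expansion argument forces $\det M = \prod_{i\neq k}(x_i-1)\neq 0$ since $x_i\neq 1$ for $i\neq k$, giving part 1.

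For part 3(a), with all $x_i\neq 1$ the matrix $-D$ is invertible, so I would apply the matrix determinant lemma to $M=-D+jj^T$. Using $\det(-D)=(-1)^n\prod_i(1-x_i)$ and $j^T(-D)^{-1}j=\sum_i\frac{1}{x_i-1}=-S$, this yields $\det M=(-1)^n\big(\prod_i(1-x_i)\big)(1-S)$, so $M$ is nonsingular exactly when $S\neq 1$. For part 3(b) I would observe that the bordered matrix is itself the theorem's matrix of order $n+1$, with diagonal $(x_1,\dots,x_n,x_{n+1})$ and all off-diagonal entries equal to $1$. If $x_{n+1}\neq 1$ its associated sum is $S'=S+\frac{1}{1-x_{n+1}}=1+\frac{1}{1-x_{n+1}}\neq 1$, so part 3(a) gives nonsingularity; if $x_{n+1}=1$ then exactly one diagonal entry equals $1$ and part 1 applies. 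Either way the bordered matrix is nonsingular.

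Part 4 is the computational heart. Since $M$ is invertible I would perform the standard block reduction of $\tilde M$: one row operation (add $-j^TM^{-1}$ times the first $n$ rows to the last row) clears the lower-left $j^T$, and one column operation (add $-M^{-1}j$ times the first $n$ columns to the last column) clears the upper-right $j$, leaving the block $M$ untouched and replacing $\alpha$ by $\alpha-j^TM^{-1}j$. Thus $\gamma=-j^TM^{-1}j$, and it remains to evaluate this scalar, which I would do by solving $My=j$ and using $j^TM^{-1}j=\sum_i y_i$: setting $T=\sum_i y_i$, the $i$-th equation reads $T+(x_i-1)y_i=1$. In case (i) this gives $y_i=\tfrac{1-T}{x_i-1}$, and summing yields $T(1-S)=-S$, hence $T=\tfrac{S}{S-1}$ and $\gamma=-\tfrac{S}{S-1}$; in case (ii) the equation for the index with $x_k=1$ forces $T=1$ directly, so $\gamma=-1$. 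I expect the main obstacle to be organizing part 4 so that both values of $\gamma$ emerge from the single identity $\gamma=-j^TM^{-1}j$, and making sure the $x_{n+1}=1$ corner case of part 3(b) is not overlooked; both are resolved cleanly by the $My=j$ computation and the order-$(n+1)$ embedding.
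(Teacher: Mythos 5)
Your proposal is correct, and while its overall architecture matches the paper's (degenerate cases by row operations, the order-$(n+1)$ embedding for 3(b), and a Schur-complement/congruence reduction with $\gamma=-j^TM^{-1}j$ for part 4), the two computational engines you use are genuinely different. For 3(a) the paper writes $D-J=D^{1/2}(I-yy^T)D^{1/2}$ with $y=D^{-1/2}j$ and reads off singularity from the eigenvalues $1-\|y\|^2$ and $1$ of $I-yy^T$; your matrix determinant lemma applied to $M=-D+jj^T$ instead yields the explicit formula $\det M=(-1)^n\bigl(\prod_i(1-x_i)\bigr)(1-S)$, which is cleaner and sidesteps a small rigor issue in the paper: $D^{-1/2}$ is not real when some $x_i>1$, so the paper's factorization implicitly requires complex square roots (harmless, since $yy^T$ is still rank one with trace $y^Ty=S$, but it needs saying). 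For part 4 the paper computes $M^{-1}$ explicitly via the Sherman--Morrison identity $(I-yy^T)^{-1}=I+\frac{1}{1-\|y\|^2}yy^T$ to get $j^TM^{-1}j=\frac{S}{S-1}$, and then treats the case of exactly one $x_k=1$ by a separate ad hoc argument (subtracting the first column and row from the last); your trick of solving $My=j$ and exploiting the identity $T+(x_i-1)y_i=1$ with $T=\sum_i y_i$ is more elegant, since the $k$-th equation forces $T=1$ when $x_k=1$, so both values of $\gamma$ fall out of the single identity $\gamma=-T$ with no case-specific elementary operations. Two minor points you should make explicit: in case (i) the division by $1-S$ is legitimate because part 4 assumes $M$ nonsingular, which by 3(a) forces $S\neq 1$; and your block row/column operations should be noted to be compositions of elementary operations, to match the theorem's phrasing.
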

\begin{proof}
\begin{enumerate}
\item  Without loss of generality, let $x_1=1$. By subtracting the 1-st row from the rest of the rows, $M$ can be transformed to the matrix,
$$\begin{bmatrix}
1 & 1 & \hdots & 1\\ 0 & (x_2-1) &\ddots &\vdots \\ \vdots & \ddots & \ddots  & 1\\0 &\hdots & 0& (x_n-1)  
\end{bmatrix},$$ whose determinant $\prod_{i=2}^n(x_i-1)$ is nonzero.
\item In this case two rows (or columns) are the same.
\item \begin{enumerate}
\item  We have 
$$D-J=D^{1/2}(I-D^{-1/2}JD^{-1/2})D^{1/2},$$ 
$$D^{-1/2}JD^{-1/2}=D^{-1/2}jj^TD^{-1/2}=\big(D^{-1/2}j\big)\big(D^{-1/2}j\big)^T.$$
Let $y=D^{-1/2}j.$ The $J-D=M$ is nonsingular if and only if $I-yy^T$ is nonsingular. Since the eigenvalues of $yy^T$ are $||y||^2=y^Ty$ and $0,$ the eigenvalues of the matrix $I-yy^T$ are $1-||y||^2$ and $1$ hence it is nonsingular if and only if $||y||^2\neq 1.$ That is $M$ is nonsingular if and only if $$\sum_{i=1}^n\frac{1}{1-x_i}\neq 1.$$ 
\item \begin{enumerate}
\item If $x_{n+1}\neq 1:$ as $S=1,$ this implies $$S+\frac{1}{1-x_{n+1}}\neq 1.$$ Hence the result follows by 3(a).

\item If $x_{n+1}= 1:$ then result follows by 1.
\end{enumerate}
\end{enumerate} 
\item \begin{enumerate}
\item If $x_i\neq 1, i=1,\hdots n$. Let $y=D^{-1/2}j.$ We have, $$(I-yy^T)(I+tyy^T)=I+(t-1-t||y||^2)yy^T,$$ and therefore if $||y||^2\neq 1,$ $$(I-yy^T)^{-1}=I+\frac{1}{1-||y||^2}yy^T.$$ Thus if $J-D$ is invertible, where
$d_i=1-x_i,$ then, by the above, $$(J-D)^{-1}=-(D-J)^{-1}=-D^{-1/2}\Big(I+\frac{1}{1-||y||^2}yy^T\Big)D^{-1/2},$$ where $y=D^{-1/2}j.$

 \begin{align*} \label{vog}
j^TM^{-1}j &=-j^TD^{-1/2}\Big(I+\frac{1}{1-||y||^2}yy^T\Big)D^{-1/2}j\nonumber\\ &= -y^T\Big(I+\frac{1}{1-||y||^2}yy^T\Big)y=\frac{||y||^2}{||y||^2-1}.
\end{align*}

 Let
\begin{equation*}\label{fjds}
P=\begin{bmatrix}
I & -M^{-1}j\\ o^T & 1
\end{bmatrix}.
\end{equation*} 
Then
\begin{equation*}\label{fjde}
P^T\tilde{M}P=\begin{bmatrix}
M & o\\ o^T & \alpha+\gamma
\end{bmatrix},
\end{equation*}
where  $$\gamma=-j^TM^{-1}j=-\frac{||y||^2}{||y||^2-1}=-\frac{S}{S-1}.$$
\item When exactly one of  $x_1,\hdots,x_n$ is equal to 1, without loss of generality, $x_1=1,$ we can write
\begin{equation*}
\tilde{M}=\begin{bmatrix}
1 & 1 & \hdots & 1 & 1\\ 1 & x_2 &\ddots &\vdots & \vdots \\ \vdots & \ddots & \ddots  & 1 & 1\\1 &\hdots & 1& x_n & 1\\ 1 &\hdots & 1& 1& \alpha
\end{bmatrix}
\end{equation*}

On subtracting the first column from the last column, and subsequently subtracting  the first row from the last row. We get the following matrix, 
\begin{equation*}
\tilde{M}=\begin{bmatrix}
1 & 1 & \hdots & 1 & 0\\ 1 & x_2 &\ddots &\vdots & \vdots \\ \vdots & \ddots & \ddots  & 1 & 0\\1 &\hdots & 1& x_n & 0\\ 0 &\hdots & 0& 0& \alpha-1
\end{bmatrix},
\end{equation*} hence $\gamma=-1.$
\end{enumerate}
\end{enumerate}
\end{proof}

\begin{lemma} {\cite[Lemma 2.3]{singh2017characteristic}} \label{supfirst}
If $G$ is a coalescence of $G_1$ and $G_2$ at a vertex $v$ having loop of weight $\alpha$, then $$\det(G)=\det(G_1)\det(G_2\setminus v)+\det(G_1\setminus v)\det(G_2)-\alpha \det(G_1 \setminus v)\det(G_2\setminus v).$$
\end{lemma}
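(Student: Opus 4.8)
The plan is to write the adjacency matrix of $G$ in a block form adapted to the cut vertex $v$ and to exploit the multilinearity of the determinant in the row and the column indexed by $v$. Order the vertices of $G$ so that $v$ is last, the remaining vertices of $G_1$ come first and those of $G_2$ next. Writing $A_i=A(G_i\setminus v)$ and letting $b_i$ be the $(0,1)$-column vector recording the edges from $v$ to $G_i\setminus v$, we have
$$A(G_1)=\begin{bmatrix} A_1 & b_1 \\ b_1^T & \alpha \end{bmatrix},\qquad A(G_2)=\begin{bmatrix} A_2 & b_2 \\ b_2^T & \alpha \end{bmatrix},\qquad A(G)=\begin{bmatrix} A_1 & O & b_1 \\ O & A_2 & b_2 \\ b_1^T & b_2^T & \alpha \end{bmatrix},$$
where the loop of weight $\alpha$ sits on the diagonal entry of $v$ in each of the three matrices. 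Note that $\det(G_i\setminus v)=\det(A_i)$, and that in both $A(G_1)$ and $A(G_2)$ the vertex $v$ already carries the full weight $\alpha$; since $v$ carries $\alpha$ (not $2\alpha$) in $A(G)$, a correction term must appear, and producing it correctly is the real content of the lemma.

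First I would apply multilinearity to the last column of $A(G)$, splitting it as $(b_1^T,0,\alpha)^T+(0,b_2^T,0)^T$, and then apply multilinearity to the last row of each of the two resulting determinants, so that $\det(G)$ becomes a sum of four determinants. Two of them vanish: in each such term the columns belonging to one of the two blocks, together with the last column, have all their nonzero entries confined to the rows of that single block, so there are more such columns than available rows and every permutation term is zero (equivalently, those columns are linearly dependent). The two surviving determinants are block-decoupled: in one, the $A_2$-block splits off and leaves $\det\begin{bmatrix} A_1 & b_1 \\ b_1^T & \alpha \end{bmatrix}=\det(G_1)$, giving $\det(A_2)\det(G_1)$; in the other, the $A_1$-block splits off and leaves $\det\begin{bmatrix} A_2 & b_2 \\ b_2^T & 0 \end{bmatrix}$, giving $\det(A_1)\det\begin{bmatrix} A_2 & b_2 \\ b_2^T & 0 \end{bmatrix}$.

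The final step is to restore the loop weight in the second surviving term. Since the determinant is an affine function of any single diagonal entry, $\det\begin{bmatrix} A_2 & b_2 \\ b_2^T & t \end{bmatrix}=t\,\det(A_2)+(\text{value at }t=0)$; evaluating at $t=\alpha$ and $t=0$ gives $\det\begin{bmatrix} A_2 & b_2 \\ b_2^T & 0 \end{bmatrix}=\det(G_2)-\alpha\,\det(A_2)$. Substituting, $\det(G)=\det(G_1)\det(A_2)+\det(A_1)\big(\det(G_2)-\alpha\det(A_2)\big)$, which is exactly the claimed identity once $\det(A_i)$ is rewritten as $\det(G_i\setminus v)$.

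I expect the main obstacle to be bookkeeping rather than any deep step: the row and the column must be split so that the weight $\alpha$ is counted exactly once (here it is kept entirely in the $G_1$-term and removed from the $G_2$-term), for otherwise the loop is double counted and the correction $-\alpha\det(G_1\setminus v)\det(G_2\setminus v)$ comes out wrong. The vanishing of the two cross terms should also be justified carefully, and the cleanest phrasing is the counting/linear-dependence argument above. An alternative route avoiding the explicit row/column split is the bordered-determinant identity $\det\begin{bmatrix} B & u \\ u^T & \gamma \end{bmatrix}=\gamma\det(B)-u^T\mathrm{adj}(B)\,u$ with $B=\begin{bmatrix} A_1 & O \\ O & A_2 \end{bmatrix}$, together with the fact that $\mathrm{adj}(B)$ is block diagonal with blocks $\det(A_2)\mathrm{adj}(A_1)$ and $\det(A_1)\mathrm{adj}(A_2)$; this yields the same three terms after identifying $\det(G_i)=\alpha\det(A_i)-b_i^T\mathrm{adj}(A_i)b_i$, but it relies on the adjugate machinery rather than on elementary multilinearity.
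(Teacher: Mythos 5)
Your proof is correct, but there is nothing in the paper to compare it against: the paper states Lemma~2 without proof, importing it verbatim from Lemma~2.3 of the cited reference \cite{singh2017characteristic}. Your argument is a valid, self-contained, and elementary replacement. The four-term expansion is sound: splitting the last column of $A(G)$ as $(b_1^T,o^T,\alpha)^T+(o^T,b_2^T,0)^T$ and then the last row of each summand gives two block-decoupled terms, $\det(A_2)\det(G_1)$ and $\det(A_1)\det\bigl[\begin{smallmatrix} A_2 & b_2\\ b_2^T & 0\end{smallmatrix}\bigr]$, while in each cross term the $|V(G_i)|$ columns consisting of the $A_i$-block columns together with the border column are supported on only $|V(G_i)|-1$ rows, so they are linearly dependent and the determinant vanishes; the affine dependence of the determinant on the diagonal entry $t$, with coefficient $\det(A_2)$, then correctly restores the loop and produces the term $-\alpha\det(A_1)\det(A_2)$. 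Two remarks. First, you are right that the loop convention is the real content here: under the paper's stated definition of coalescence (loops at the identified vertices \emph{add}), the identity would hold with no $\alpha$-term at all, whereas the stated formula with the correction $-\alpha\det(G_1\setminus v)\det(G_2\setminus v)$ is exactly the one for your reading, in which $v$ carries the same loop $\alpha$ in $G$, $G_1$, and $G_2$; this shared-loop reading is also the one consistent with how the lemma is invoked in Section~5, where both $G_i$ and the bordered graph of $M^{\star}$ contain the cut vertex with its loop $\alpha$. Second, your alternative sketch via $\det\bigl[\begin{smallmatrix} B & u\\ u^T & \gamma\end{smallmatrix}\bigr]=\gamma\det(B)-u^T\mathrm{adj}(B)\,u$ with block-diagonal $B$ is the standard bordered-determinant derivation of such coalescence formulas and reaches the same three terms faster; your multilinearity route buys elementarity at the cost of bookkeeping, and both are fine. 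A negligible loose end: when $G_i\setminus v$ is void you need the convention $\det$ of an empty matrix equals $1$, which the paper itself implicitly uses.
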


\begin{figure}

     \begin{subfigure}[b]{0.20\textwidth}
\begin{center}
 \begin{tikzpicture} [scale=0.7] [->,>=stealth',shorten >=1pt,auto,node distance=4cm,
                thick,main node/.style={circle,draw,font=\Large\bfseries}] 
               
 \draw  node[draw,circle,scale=0.40] (1) at (0,0) {$1$}; 
\draw  node[draw,circle,scale=0.40] (2) at (2,0) {$2$};
\draw  node[draw,circle,scale=0.40] (3) at (2,2) {$3$};
\draw  node[draw,circle,scale=0.40] (4) at (0,2) {$4$};
\draw  node[draw,circle,scale=0.40] (8) at (0,-2) {${8}$};
\draw  node[draw,circle,scale=0.40] (9) at (.75,-1.5) {${9}$}; 
\draw  node[draw,circle,scale=0.35] (10) at (-0.75,-1.5) {${10}$};

\draw  node[draw,circle,scale=0.40] (6) at (2,4) {${6}$};
\draw  node[draw,circle,scale=0.40] (7) at (2.75,3.5) {${7}$}; 
\draw  node[draw,circle,scale=0.40] (5) at (1.25,3.5) {${5}$}; 

\draw  node[draw,circle,scale=0.35] (11) at (2.85,0.65) {${11}$};
\draw  node[draw,circle,scale=0.35] (12) at (2.85,-0.65) {$12$};

\tikzset{edge/.style = {- = latex'}}
\draw[edge] (1) to (2);                  
 \draw[edge] (1) to (3);                  
 \draw[edge] (1) to (4);                  
 \draw[edge] (3) to (2);                  
 \draw[edge] (4) to (2);                  
 \draw[edge] (3) to (4);
 
 \draw[edge] (1) to (8);                  
 \draw[edge] (1) to (9);                  
 \draw[edge] (1) to (10);                  
 \draw[edge] (8) to (9);                  
 \draw[edge] (8) to (10);                  
 \draw[edge] (9) to (10);

\draw[edge] (3) to (5);                  
 \draw[edge] (3) to (6);                  
 \draw[edge] (3) to (7);                  
 \draw[edge] (5) to (6);                  
 \draw[edge] (5) to (7);                  
 \draw[edge] (6) to (7);  

\draw[edge] (2) to (11);                  
\draw[edge] (2) to (12);                  
\draw[edge] (11) to (12);

\end{tikzpicture}
     \caption{} \end{center} \label{31block}
    \end{subfigure}    \begin{subfigure}[b]{0.25\textwidth}
\begin{center} \begin{tikzpicture}[scale=0.7] [->,>=stealth',shorten >=1pt,auto,node distance=4cm,
                thick,main node/.style={circle,draw,font=\Large\bfseries}]
                       \tikzset{vertex/.style = {shape=circle,draw}}
 \tikzset{edge/.style = {- = latex'}}
                                         vertices
\draw  node[draw,circle,scale=0.40] (1) at (0,0) {$1$}; 
\draw  node[draw,circle,scale=0.40] (2) at (2,0) {$2$};
\draw  node[draw,circle,scale=0.40] (3) at (2,2) {$3$};
\draw  node[draw,circle,scale=0.40] (4) at (0,2) {$4$};
\draw  node[draw,circle,scale=0.40] (8) at (0,-2) {${8}$};
\draw  node[draw,circle,scale=0.40] (9) at (.75,-1.5) {${9}$}; 
\draw  node[draw,circle,scale=0.40] (10) at (-0.75,-1.5) {${10}$}; 

\draw  node[draw,circle,scale=0.35] (11) at (2.85,0.65) {${11}$};
\draw  node[draw,circle,scale=0.35] (12) at (2.85,-0.65) {${12}$};

\draw  node[draw,circle,scale=0.40] (6) at (2,4) {${6}$};
\draw  node[draw,circle,scale=0.40] (7) at (2.75,3.5) {${7}$}; 
\draw  node[draw,circle,scale=0.40] (5) at (1.25,3.5) {${5}$};

\tikzset{edge/.style = {- = latex'}}
\draw[edge] (1) to (2);                  
 \draw[edge] (1) to (3);                  
 \draw[edge] (1) to (4);                  
 \draw[edge] (3) to (2);                  
 \draw[edge] (4) to (2);                  
 \draw[edge] (3) to (4);
 
 \draw[edge] (1) to (8);                  
 \draw[edge] (1) to (9);                  
 \draw[edge] (1) to (10);                  
 \draw[edge] (8) to (9);                  
 \draw[edge] (8) to (10);                  
 \draw[edge] (9) to (10);
\draw[edge] (2) to (11);                  
\draw[edge] (2) to (12);                  
\draw[edge] (11) to (12);

 \draw[edge] (5) to (6);                  
 \draw[edge] (5) to (7);                  
 \draw[edge] (6) to (7);
 \node[] at (2.25,2.3) {$-\frac{3}{2}$};                  
 
\draw (3) to [out=330,in=300,looseness=12] (3); 
   
 \end{tikzpicture}
  \caption{} 
  \end{center}\label{r2biblock}
    \end{subfigure} \begin{subfigure}[b]{0.25\textwidth}
 \begin{center}
\begin{tikzpicture}[scale=0.7] [->,>=stealth',shorten >=1pt,auto,node distance=4cm,
                thick,main node/.style={circle,draw,font=\Large\bfseries}]
                       \tikzset{vertex/.style = {shape=circle,draw}}
 \tikzset{edge/.style = {- = latex'}}
                                         vertices
\draw  node[draw,circle,scale=0.40] (1) at (0,0) {$1$}; 
\draw  node[draw,circle,scale=0.40] (2) at (2,0) {$2$};
\draw  node[draw,circle,scale=0.40] (3) at (2,2) {$3$};
\draw  node[draw,circle,scale=0.35] (11) at (2.85,0.65) {${11}$};
\draw  node[draw,circle,scale=0.35] (12) at (2.85,-0.65) {${12}$};

\node[] at (2.25,2.3) {$-\frac{3}{2}$};                  

\node[] at (-0.2,-0.6) {$-\frac{3}{2}$};  
\draw  node[draw,circle,scale=0.40] (4) at (0,2) {$4$};
\draw  node[draw,circle,scale=0.40] (8) at (0,-2) {${8}$};
\draw  node[draw,circle,scale=0.40] (9) at (.75,-1.5) {${9}$}; 
\draw  node[draw,circle,scale=0.40] (10) at (-0.75,-1.5) {${10}$};

\draw  node[draw,circle,scale=0.40] (6) at (2,4) {${6}$};
\draw  node[draw,circle,scale=0.40] (7) at (2.75,3.5) {${7}$}; 
\draw  node[draw,circle,scale=0.40] (5) at (1.25,3.5) {${5}$};

\tikzset{edge/.style = {- = latex'}}
 \draw[edge] (3) to (2);                  
 \draw[edge] (4) to (2);                  
 \draw[edge] (3) to (4);

 \draw[edge] (8) to (9);                  
 \draw[edge] (8) to (10);                  
 \draw[edge] (9) to (10);

\draw[edge] (2) to (11);                  
\draw[edge] (2) to (12);                  
\draw[edge] (11) to (12);

\draw[edge] (1) to (3);                  
\draw[edge] (2) to (1);                  
\draw[edge] (4) to (1);

 \draw[edge] (5) to (6);                  
 \draw[edge] (5) to (7);                  
 \draw[edge] (6) to (7);  
 \draw (3) to [out=330,in=300,looseness=12] (3);

 \draw (1) to [out=330,in=300,looseness=12] (1);
 \end{tikzpicture}
  \caption{}\end{center}  
  \label{r3biblock}
    \end{subfigure}  \begin{subfigure}[b]{0.25\textwidth}
 \begin{center}
\begin{tikzpicture}[scale=0.7] [->,>=stealth',shorten >=1pt,auto,node distance=4cm,
                thick,main node/.style={circle,draw,font=\Large\bfseries}]
                       \tikzset{vertex/.style = {shape=circle,draw}}
 \tikzset{edge/.style = {- = latex'}}
                                         vertices
\draw  node[draw,circle,scale=0.40] (1) at (0,0) {$1$}; 
\draw  node[draw,circle,scale=0.40] (2) at (2,0) {$2$};
\draw  node[draw,circle,scale=0.40] (3) at (2,2) {$3$};
\draw  node[draw,circle,scale=0.35] (11) at (2.85,0.65) {${11}$};
\draw  node[draw,circle,scale=0.35] (12) at (2.85,-0.65) {${12}$};

\node[] at (2.25,2.3) {$-\frac{3}{2}$};                  

\node[] at (-0.2,-0.6) {$-\frac{3}{2}$};

\node[] at (1.35,-0.35) {$-\frac{9}{4}$};
  
\draw  node[draw,circle,scale=0.40] (4) at (0,2) {$4$};
\draw  node[draw,circle,scale=0.40] (8) at (0,-2) {${8}$};
\draw  node[draw,circle,scale=0.40] (9) at (.75,-1.5) {${9}$}; 
\draw  node[draw,circle,scale=0.40] (10) at (-0.75,-1.5) {${10}$};

\draw  node[draw,circle,scale=0.40] (6) at (2,4) {${6}$};
\draw  node[draw,circle,scale=0.40] (7) at (2.75,3.5) {${7}$}; 
\draw  node[draw,circle,scale=0.40] (5) at (1.25,3.5) {${5}$};

\tikzset{edge/.style = {- = latex'}}

 \draw[edge] (3) to (4);

 \draw[edge] (8) to (9);                  
 \draw[edge] (8) to (10);                  
 \draw[edge] (9) to (10);

\draw[edge] (2) to (11);                  
\draw[edge] (2) to (12);                  
\draw[edge] (11) to (12);

\draw[edge] (1) to (3);                  
\draw[edge] (4) to (1);

 \draw[edge] (5) to (6);                  
 \draw[edge] (5) to (7);                  
 \draw[edge] (6) to (7);  
 \draw (3) to [out=330,in=300,looseness=12] (3);

 \draw (1) to [out=330,in=300,looseness=12] (1);
 \draw (2) to [out=270,in=300,looseness=12] (2);
 \end{tikzpicture}
  \caption{}\end{center}  
  \label{r3biblock}
    \end{subfigure} 
\caption{Results of elementary  row/column operations on the succesive pendant blocks of the block graph in (a). In (d) all the four weighted induced subgraphs are nonsingular, hence, the block graph in (a) is nonsingular.} \label{r2fig}
\end{figure}
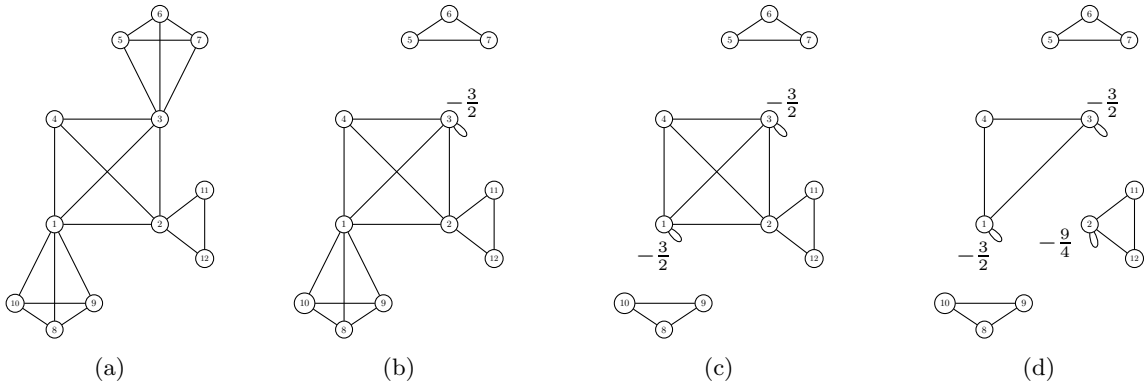

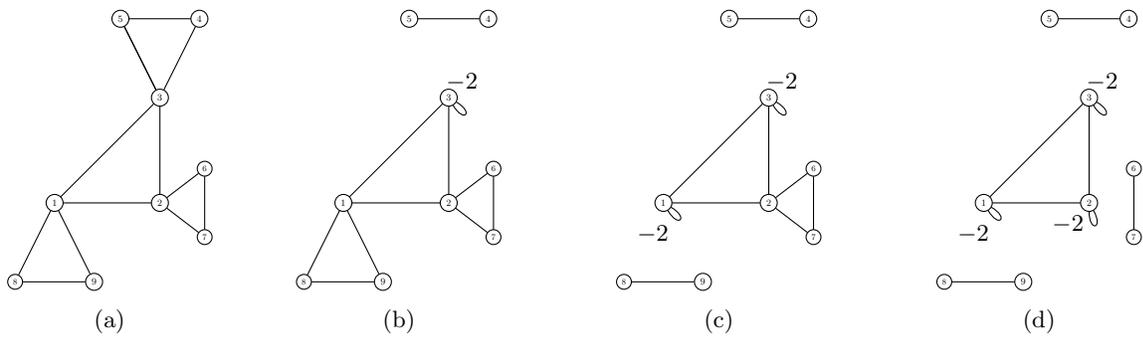
\begin{figure}

     \begin{subfigure}[b]{0.20\textwidth}
\begin{center}
 \begin{tikzpicture} [scale=0.7] [->,>=stealth',shorten >=1pt,auto,node distance=4cm,
                thick,main node/.style={circle,draw,font=\Large\bfseries}] 
               
 \draw  node[draw,circle,scale=0.40] (1) at (0,0) {$1$}; 
\draw  node[draw,circle,scale=0.40] (2) at (2,0) {$2$};
\draw  node[draw,circle,scale=0.40] (3) at (2,2) {$3$};
\draw  node[draw,circle,scale=0.40] (9) at (.75,-1.5) {${9}$}; 
\draw  node[draw,circle,scale=0.35] (8) at (-0.75,-1.5) {$8$};

\draw  node[draw,circle,scale=0.40] (4) at (2.75,3.5) {${4}$}; 
\draw  node[draw,circle,scale=0.40] (5) at (1.25,3.5) {${5}$}; 

\draw  node[draw,circle,scale=0.35] (6) at (2.85,0.65) {$6$};
\draw  node[draw,circle,scale=0.35] (7) at (2.85,-0.65) {$7$};

\tikzset{edge/.style = {- = latex'}}
\draw[edge] (1) to (2);                  
 \draw[edge] (1) to (3);                  
 \draw[edge] (3) to (4);                  
 \draw[edge] (3) to (2);                  
 \draw[edge] (4) to (5);                  
 \draw[edge] (3) to (5);
 
 \draw[edge] (1) to (8);                  
 \draw[edge] (1) to (9);                  
 \draw[edge] (8) to (9);                  
 
\draw[edge] (3) to (5);                  
                   
 \draw[edge] (2) to (6);                  
 \draw[edge] (2) to (7);                  
 \draw[edge] (6) to (7);

\end{tikzpicture}
     \caption{} \end{center} \label{31block}
    \end{subfigure}    \begin{subfigure}[b]{0.25\textwidth}
\begin{center} \begin{tikzpicture}[scale=0.7] [->,>=stealth',shorten >=1pt,auto,node distance=4cm,
                thick,main node/.style={circle,draw,font=\Large\bfseries}]
                       \tikzset{vertex/.style = {shape=circle,draw}}
 \tikzset{edge/.style = {- = latex'}}
                                         vertices
\draw  node[draw,circle,scale=0.40] (1) at (0,0) {$1$}; 
\draw  node[draw,circle,scale=0.40] (2) at (2,0) {$2$};
\draw  node[draw,circle,scale=0.40] (3) at (2,2) {$3$};
\draw  node[draw,circle,scale=0.40] (9) at (.75,-1.5) {${9}$}; 
\draw  node[draw,circle,scale=0.35] (8) at (-0.75,-1.5) {$8$};

\draw  node[draw,circle,scale=0.40] (4) at (2.75,3.5) {${4}$}; 
\draw  node[draw,circle,scale=0.40] (5) at (1.25,3.5) {${5}$}; 

\draw  node[draw,circle,scale=0.35] (6) at (2.85,0.65) {$6$};
\draw  node[draw,circle,scale=0.35] (7) at (2.85,-0.65) {$7$};

\tikzset{edge/.style = {- = latex'}}
\draw[edge] (1) to (2);                  
 \draw[edge] (1) to (3);                  
                  
 \draw[edge] (3) to (2);                  
 \draw[edge] (4) to (5);

 \draw[edge] (1) to (8);                  
 \draw[edge] (1) to (9);                  
 \draw[edge] (8) to (9);

 \draw[edge] (2) to (6);                  
 \draw[edge] (2) to (7);                  
 \draw[edge] (6) to (7);  

\draw (3) to [out=330,in=300,looseness=12] (3);

\node[] at (2.25,2.3) {$-2$};

\end{tikzpicture}
  \caption{} 
  \end{center}\label{r2biblock}
    \end{subfigure} \begin{subfigure}[b]{0.25\textwidth}
 \begin{center}
\begin{tikzpicture}[scale=0.7] [->,>=stealth',shorten >=1pt,auto,node distance=4cm,
                thick,main node/.style={circle,draw,font=\Large\bfseries}]
                       \tikzset{vertex/.style = {shape=circle,draw}}
 \tikzset{edge/.style = {- = latex'}}
                                         vertices
\draw  node[draw,circle,scale=0.40] (1) at (0,0) {$1$}; 
\draw  node[draw,circle,scale=0.40] (2) at (2,0) {$2$};
\draw  node[draw,circle,scale=0.40] (3) at (2,2) {$3$};
\draw  node[draw,circle,scale=0.40] (9) at (.75,-1.5) {${9}$}; 
\draw  node[draw,circle,scale=0.35] (8) at (-0.75,-1.5) {$8$}; 
\node[] at (-0.2,-0.6) {$-2$};

\draw  node[draw,circle,scale=0.40] (4) at (2.75,3.5) {${4}$}; 
\draw  node[draw,circle,scale=0.40] (5) at (1.25,3.5) {${5}$}; 

\draw  node[draw,circle,scale=0.35] (6) at (2.85,0.65) {$6$};
\draw  node[draw,circle,scale=0.35] (7) at (2.85,-0.65) {$7$};

\tikzset{edge/.style = {- = latex'}}
\draw[edge] (1) to (2);                  
 \draw[edge] (1) to (3);                  
                  
 \draw[edge] (3) to (2);                  
 \draw[edge] (4) to (5);

 \draw[edge] (8) to (9);

 \draw[edge] (2) to (6);                  
 \draw[edge] (2) to (7);                  
 \draw[edge] (6) to (7);  

\draw (3) to [out=330,in=300,looseness=12] (3);

\node[] at (2.25,2.3) {$-2$};

 \draw (1) to [out=330,in=300,looseness=12] (1);
 \end{tikzpicture}
  \caption{}\end{center}  
  \label{r3biblock}
    \end{subfigure}  \begin{subfigure}[b]{0.25\textwidth}
 \begin{center}
\begin{tikzpicture}[scale=0.7] [->,>=stealth',shorten >=1pt,auto,node distance=4cm,
                thick,main node/.style={circle,draw,font=\Large\bfseries}]
                       \tikzset{vertex/.style = {shape=circle,draw}}
 \tikzset{edge/.style = {- = latex'}}
                                         vertices
\draw  node[draw,circle,scale=0.40] (1) at (0,0) {$1$}; 
\draw  node[draw,circle,scale=0.40] (2) at (2,0) {$2$};
\draw  node[draw,circle,scale=0.40] (3) at (2,2) {$3$};
\draw  node[draw,circle,scale=0.40] (9) at (.75,-1.5) {${9}$}; 
\draw  node[draw,circle,scale=0.35] (8) at (-0.75,-1.5) {$8$}; 
\node[] at (-0.2,-0.6) {$-2$};

\node[] at (1.60,-0.4) {$-2$};                  

\draw  node[draw,circle,scale=0.40] (4) at (2.75,3.5) {${4}$}; 
\draw  node[draw,circle,scale=0.40] (5) at (1.25,3.5) {${5}$}; 

\draw  node[draw,circle,scale=0.35] (6) at (2.85,0.65) {$6$};
\draw  node[draw,circle,scale=0.35] (7) at (2.85,-0.65) {$7$};

\tikzset{edge/.style = {- = latex'}}
\draw[edge] (1) to (2);                  
 \draw[edge] (1) to (3);                  
                  
 \draw[edge] (3) to (2);                  
 \draw[edge] (4) to (5);

 \draw[edge] (8) to (9);

 \draw[edge] (6) to (7);  

\draw (3) to [out=330,in=300,looseness=12] (3);

\node[] at (2.25,2.3) {$-2$};

 \draw (1) to [out=330,in=300,looseness=12] (1);
 \draw (2) to [out=270,in=300,looseness=12] (2);
 \end{tikzpicture}
  \caption{}\end{center}  
  \label{r3biblock}
    \end{subfigure} 
\caption{Results of elementary  row/column operations on the succesive pendant blocks of the block graph in (a). In (d) the weighted induced subgraph on vertices $1,2,3$ is singular, hence, the block graph in (a) is singular.} \label{r3fig}
\end{figure}  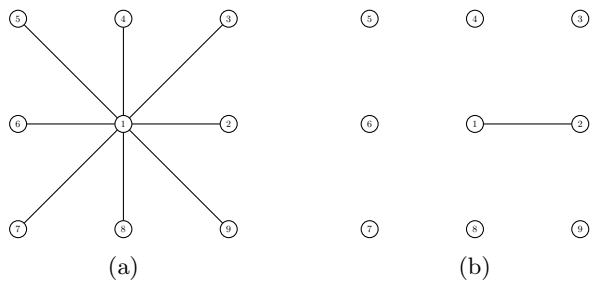
\begin{figure} \centering

     \begin{subfigure}[b]{0.20\textwidth}
\begin{center}
 \begin{tikzpicture} [scale=0.7] [->,>=stealth',shorten >=1pt,auto,node distance=4cm,
                thick,main node/.style={circle,draw,font=\Large\bfseries}] 
               
 \draw  node[draw,circle,scale=0.40] (1) at (0,0) {$1$}; 
\draw  node[draw,circle,scale=0.40] (2) at (2,0) {$2$};
\draw  node[draw,circle,scale=0.40] (3) at (2,2) {$3$};

\draw  node[draw,circle,scale=0.40] (4) at (0,2) {$4$};

\draw  node[draw,circle,scale=0.40] (5) at (-2,2) {$5$};
\draw  node[draw,circle,scale=0.40] (6) at (-2,0) {$6$};
\draw  node[draw,circle,scale=0.40] (7) at (-2,-2) {$7$};
\draw  node[draw,circle,scale=0.40] (8) at (0,-2) {$8$};

\draw  node[draw,circle,scale=0.40] (9) at (2,-2) {$9$};

\tikzset{edge/.style = {- = latex'}}
\draw[edge] (1) to (2);                  
 \draw[edge] (1) to (3);                  
 \draw[edge] (1) to (4);                  
                  
 \draw[edge] (1) to (5);                  
 \draw[edge] (1) to (6);
 \draw[edge] (1) to (7);
 \draw[edge] (1) to (8);                  
 \draw[edge] (1) to (9);                  
 
\end{tikzpicture}
     \caption{} \end{center} \label{31block}
    \end{subfigure}    \begin{subfigure}[b]{0.35\textwidth}
\begin{center} \begin{tikzpicture}[scale=0.7] [->,>=stealth',shorten >=1pt,auto,node distance=4cm,
                thick,main node/.style={circle,draw,font=\Large\bfseries}]
                       \tikzset{vertex/.style = {shape=circle,draw}}
 \tikzset{edge/.style = {- = latex'}}
                                         vertices
 \draw  node[draw,circle,scale=0.40] (1) at (0,0) {$1$}; 
\draw  node[draw,circle,scale=0.40] (2) at (2,0) {$2$};
\draw  node[draw,circle,scale=0.40] (3) at (2,2) {$3$};

\draw  node[draw,circle,scale=0.40] (4) at (0,2) {$4$};

\draw  node[draw,circle,scale=0.40] (5) at (-2,2) {$5$};
\draw  node[draw,circle,scale=0.40] (6) at (-2,0) {$6$};
\draw  node[draw,circle,scale=0.40] (7) at (-2,-2) {$7$};
\draw  node[draw,circle,scale=0.40] (8) at (0,-2) {$8$};

\draw  node[draw,circle,scale=0.40] (9) at (2,-2) {$9$};

\tikzset{edge/.style = {- = latex'}}
\draw[edge] (1) to (2);  
\end{tikzpicture}
  \caption{} 
  \end{center}\label{r2biblock}
    \end{subfigure}  
\caption{Results of elementary  row/column operations on the succesive pendant blocks of the block graph in (a). In (d) there are isolated vertices with zero weights, hence, the block graph in (a) is singular.} \label{r4fig}
\end{figure}

\section{Transformation of the adjacency matrix} \label{TA}
Elementary  row/column operations transform the adjacency matrix $A(G)$ into another matrix $EA(G),$
which may have nonzero diagonal entries.  The graph corresponding to $EA(G)$ can be obtained by a change of weights on the edges and addition of loops on the vertices in $G$ according
to the elementary row/column operations on $A(G)$. This can be seen as follows. Let $B$ be a pendant block of a block graph $G$, having the cut-vertex $v$ of $G$. As reordering the vertices of $G$ does not change the rank of $G$, we can write the adjacency matrix $A(G)$ as the follows.
$$A(G)=\begin{bmatrix}
A(B\setminus v)& j & O^T\\ j^T & 0 & w^T \\ O & w & A(G\setminus B) 
\end{bmatrix}.$$ 
Let the pendant block $B$ have $m$ vertices. 
\begin{enumerate}
\item If $m \geq 3:$ as $A(B\setminus v)$ is the complete graph on $m-1 \geq 2$ vertices, it is  nonsingular. All the diagonal elements in $A(B\setminus v)$ are zero.  Using Theorem \ref{mtype}(4), $A(G)$ is transformed to the following matrix.
$$M_1=\begin{bmatrix}
A(B\setminus v)& o & O^T\\ o^T & -\frac{m-1}{m-2} & w^T \\ O & w & A(G\setminus B) 
\end{bmatrix}. $$ Let $$M_{12}= \begin{bmatrix}
  -\frac{m-1}{m-2} & w^T \\ w & A(G\setminus B) 
\end{bmatrix}.$$ We have,
$r(A(G))=r(M_1)=r(A(B\setminus v))+r(M_{12}).$ As $A(B\setminus v)$ is nonsingular,  $A(G)$ is nonsingular if and only if $M_{12}$ is nonsingular.

\item If $m=2:$ In this case $G$ is a coalescence of $B=K_2$ with $G\setminus(B\setminus v)$ at the vertex $v$. As $K_2$ is nonsingular and $K_2 \setminus v$ is singular, using Lemma \ref{supfirst}, $G$ is nonsingular if and only if $A(G\setminus B)$ is nonsingular. 
\end{enumerate}

In  view of the above discussion, in order to know whether block graph $G$ is singular or nonsingular, we need to further examine the matrices $M_{12}$ for $m\geq 3$, $A(G\setminus B),$ for $m=2$. Note that, $M_{12}$, is the matrix corresponding to the induced subgraph on the vertex-set $V(G)\setminus V(B\setminus v)$ of $G$, with a loop on the vertex $v$ having weight $-\frac{m-1}{m-2}$. By selecting a pendant block from the graph $G\setminus B$ (for  $m=2$) or the graph corresponding to $M_{12}$ ( for $m\geq 3$)  we can further investigate the rank of $G$. This process will continue until we cover all the blocks of $G$. The matrix corresponding to the pendant block at any step is of the form $M$ in Theorem \ref{mtype}. Hence we use the prior results from Section \ref{NP} to check whether $G$ is singular or not. Examples are given in Figure \ref{r2fig}  (nonsingular), Figure  \ref{r3fig} (singular) and Figure  \ref{r4fig} (singular), where at each step a pendant block is chosen for the elementary  row/column operations. The detailed algorithm is given in the next section.

\section{Algorithm}  
\label{algo}

For the purpose of the algorithm we first define two auxiliary operations on a set of sets of integers. 
\subsection{Auxiliary operations}
Let $U=\{S_1, \hdots, S_k\},$ where, $S_i, i=1,\hdots,k,$ is a set of positive integers. We define two operations on $U:$ 
\begin{enumerate}
\item $U-S_i=\{S_j \mid j\neq i\}.$
\item $U-^\star S_i=\{S_j\setminus S_i \mid j\neq i\}.$
\end{enumerate}

\begin{example}\label{example1} 
Let $S_1=\{2,3,4,5\}, S_2=\{2,7,6,5\}, S_3=\{1,3,8,5\}$. Consider $U=\{S_1,S_2,S_3\}.$ Then,
\begin{enumerate}
\item $U-S_2=\{\{2,3,4,5\}, \{1,3,8,5\} \}.$
\item $U-^\star S_2=\{\{3,4\}, \{1,3,8\} \}.$
\end{enumerate}
\end{example}

The cardinality of a set $\mathcal{U}$ is denoted by $|\mathcal{U}|$. In Example \ref{example1}, the cardinality of $U$ is 3.

\begin{algorithm} 
\Fontvi
\SetAlgoLined
\KwResult{$G$ is \textbf{singular} or $G$ is \textbf{nonsingular.}

$BV=\{V(B_1),\hdots, V(B_k)\},$\\
$CV=\{C(B_1),\hdots, C(B_k)\},$ \\
$W=\{w(1),\hdots,w(n)\}, w(i)=0, i=1,\hdots,n,$\\ $f(i)$ is the number of times vertex $i$ appears in $CV$.}

\textbf{GSING} $(BV, CV)$\;
$m:=0;$ \Comment{$m$ is a variable to check the existence of a pendant vertex-set.}\\ 
\For{$i=1:|CV|$}{ \If{$(|CV(i)|=1)$}{    
   $SBV:=BV(i)$\; 
   $p:=CV(i)$\;
   $V:= SBV\setminus p$\; $m:=1$\; break\;}}

\eIf{$(m=1)$}{
 $t:=0;$ \Comment{$t$ is a variable to count the number of nonzero weights.}\\
\For{$i=1:|V|$}{ \If{$(w(V(i))\sim=1)$}{    
   $t:=t+1$\;} }
  \eIf{$(t\sim~=|V|)$}{    
  \eIf{$(t< |V|-1)$}{$G$ is \textbf{singular} \; exit;}{
  $w(p):=w(p)-1$\;
$BV=BV-SBV$\;  
\eIf{$(f(p)==2)$}{$CV:=CV-^\star p$\;}{
  $CV:=CV-p$\;
$f(p):=f(p)-1$\;  
}   } \textbf{GSING} $(BV, CV)$ }
   { $S:=0;$ \Comment{$S$ is a variable to calculate the sum as in Theorem \ref{mtype}.3.}\\ \For{$i=1:|V|$}{$S:=S+\frac{1}{1-w(V(i))}$\;}
  \eIf{$S\neq 1$}{ $w(p):=w(p)-\frac{S}{S-1}$\;
$BV:=BV-SBV$\; \eIf{$(f(p)==2)$}{$CV:=CV-^\star p$\;}{
  $CV:=CV-p$\;
$f(p):=f(p)-1$\;  
} }{
  $BV:=BV-^\star SBV$\; 
 $CV:=CV-^\star p$\;
} \textbf{GSING} $(BV, CV)$ 
  } }{ \For{$i=1:|BV|$ }{ \If{$BV(i)$ is singular}{  $G$ is \textbf{singular}\; break\;}{}}  {$G$ is \textbf{nonsingular}\;} 
} 
 \caption{Algorithm to check whether a block graph $G$ is singular or not.}\label{balgo}
\end{algorithm}

 \subsection{Algorithm}

Given a block graph $G$, with blocks $B_1,\hdots, B_k.$ We have a set $BV$ which is the set of vertex-sets of the blocks $B_1,\hdots, B_k,$ in $G$. $CV$ is the set of cut-vertex-sets of the blocks $B_1,\hdots, B_k.$ Let $V(B_i)$ denote the vertex set of the block $B_i$ and $C(B_i)$ be the set of cut-vertices in it. That is, we have
$$BV=\{V(B_1),\hdots, V(B_k)\},$$
$$CV=\{C(B_1),\hdots, C(B_k)\}.$$
A vertex-set $V(B_i)$ in $BV$ is \emph{pendant vertex-set} if $C(B_i)$ contains exactly one cut-vertex, that is, $|C(B_i)|=1$.
Let $f(p)$ be the number of times  the cut-vetex $p$ appears in $CV$. Let $w(i)$ be the weight assigned to the vertex $i,$ $i=1,\hdots,n$ of $G$. We first give an outline of the procedure to check whether a given block graph $G$ is singular or nonsingular. We start with the sets $BV, CV$. Initially, $w(i)=0, i=1,\hdots,n.$

\textbf{BEGIN-} If $BV$ has no pendant vertex-set, go to \textbf{END}. Otherwise, pick a pendant vertex-set $V_p$, where the cut-vertex is $p$. 

\textbf{Check} if $w(i)\neq 1, \forall i \in (V_p\setminus p).$
\begin{enumerate}
\item If false: \begin{enumerate}
\item If for more than one $i,$ $w(i)$ is 1, then $G$ is \textbf{singular}. \textbf{STOP}.
\item If for exactly one $i,$ $w(i)$ is 1, then induced subgraph on $V_p\setminus p$ is nonsingular. \textbf{DO}
\begin{enumerate}
\item $w(p):=w(p)-1.$
\item $BV:=BV-V_p$.  
\item If $f(p)=2,$ then $CV:=CV-^\star p,$ else,  $CV:=CV-p, f(p):=f(p)-1.$
\end{enumerate}
 \textbf{BEGIN} the algorithm for $BV, CV$.
\end{enumerate}

\item If true: Let $S=\sum_{i \in (V_p \setminus p)} \frac{1}{1-w(i)}$. \textbf{Check} If $ S \neq 1.$
\begin{enumerate}
\item If true:
Then the induced subgraph on $V_p\setminus p$ is nonsingular. \textbf{DO}
\begin{enumerate}
\item $w(p):=w(p)-\frac{S}{S-1}.$
\item $BV:=BV-V_p$.  
\item If $f(p)=2,$ $CV:=CV-^\star p,$ else,  $CV:=CV-p, f(p):=f(p)-1.$
\end{enumerate}
 \textbf{BEGIN} the algorithm for $BV, CV$.

\item If false: 
then induced subgraph on $V_p\setminus p$ is singular. \textbf{DO}
\begin{enumerate}
\item $BV:=BV-^\star V_p$.  
\item $CV:=CV-^\star p.$
\end{enumerate}
 \textbf{BEGIN} the algorithm for $BV, CV$.

\end{enumerate}
\end{enumerate}

\textbf{END:}  Let $H_1,\hdots, H_t$ be the induced subgraphs of $G$ on the vertex sets in $BV$ with the addition of weighted loops added in the process. If $H_i$ is nonsingular for all $i=1,\hdots, t,$ then $G$ is \textbf{nonsingular} else $G$ is \textbf{singular}. Note that, each of $H_i$ is of form $M$ in Theorem \ref{mtype}, hence, using Theorem \ref{mtype},  it is checked in linear time whether $H_i$ is singular or nonsingular.

The algorithm of the above outline is given in Algorithm \ref{balgo}. Where the function \textbf{GSING} takes $BV, CV$ as input, and makes recursive use of basic arithmetic, auxiliary operations which take at the most linear time, and check whether the block graph $G$ is singular or not. Now we provide some examples in support of Algorithm \ref{balgo}.

\begin{example}\label{ex1}
Consider the block graph given in Figure \ref{r2fig}(a). $$BV=\{\{3,5,6,7\}, \{1,10,8,9\},  \{3,4,1,2\}, \{2,11,12\} \},$$ 
$$CV=\{\{3\}, \{1\}, \{3,1,2\}, \{2\}\},$$
$f(3)=f(2)=f(1)=2.$ 
\begin{enumerate}
\item 1st call to $GSING(BV, CV)$. As 3 is the only vertex in $\{3\}$ in $CV$, the corresponding vertex-set $\{3,5,6,7\}$ in $BV$ is selected as pendant vertex-set. Next, induced graph on vertex set $\{3,5,6,7\} \setminus 3$ is investigated. $S=3$, as $w(5)=w(6)=w(7)=0$. Then $w(3)$ will be updated to $w(3)-\frac{3}{2}$, that is vertex $3$ will now have a loop of weight $-\frac{3}{2}$. Now,
$$BV=\{\{1,10,8,9\}, \{3,4,1,2\},  \{2,11,12\} \},$$ 
As, $f(3)=2,$ $$CV=\{\{1\}, \{1,2\}, \{2\}\}.$$

\item 2nd call to $GSING(BV, CV)$. As 1 is the only vertex in $\{1\}$ in $CV$, the corresponding vertex-set $\{1,10,8,9\}$ in $BV$ is selected as pendant vertex-set. Next, induced graph on vertex set $\{1,10,8,9\} \setminus 1$ is investigated. As $w(10)=w(8)=w(9)=0,$ $S=3$. Then $w(1)$ will be updated to $w(1)-\frac{3}{2}$, that is vertex $1$ will now have a loop of weight $-\frac{3}{2}$. 
$$BV=\{\{3,4,1,2\}, \{2,11,12\} \},$$ 
As, $f(1)=2,$ $$CV=\{\{2\}, \{2\}\}.$$

\item 3rd call to $GSING(BV, CV)$. As 2 is the only vertex in $\{2\}$ in $CV$, the corresponding vertex-set $\{3, 4, 1, 2\}$ in $BV$ is selected as pendant vertex-set. Next, induced graph on vertex set $\{3, 4, 1, 2\} \setminus 2$ is investigated. As $w(1)=-\frac{3}{2}, w(3)=-\frac{3}{2}, w(4)=0,$ $S=\frac{9}{5}$. Then $w(2)$ will be updated to $w(2)-\frac{9}{4}$, that is vertex $2$ will now have a loop of weight $-\frac{9}{4}$. 
$$BV=\{\{2,11,12\} \},$$ 
As, $f(1)=2,$ $$CV=\{ \}.$$

\item 4-th call to $GSING(BV, CV)$. As  $CV$ is empty, there is no pendant vertex-set. We need to investigate the induced graph $H_1$ on vertex set $\{2,11,12\}$ (with loop on 2). As $w(2)=-\frac{9}{4}, w(11)=0, w(12)=0,$ and hence $S=\frac{30}{13}\neq 1$. Which means $H_1$ is nonsingular, hence $G$ is nonsingular. 

\end{enumerate}
\end{example}

\begin{example}
The steps for block graph in Figure \ref{r3fig}(a) are similar to the steps in Example \ref{ex1} for block graph in Figure \ref{r2fig}(a). In the 4-th call to $GSING(BV, CV)$, we need to investigate the induced graph $H_1$ on vertex set $\{1, 2, 3\}$ (with loops). As $w(1)=w(2)=w(3)=-2,$ and hence $S=1$. Which means $H_1$ is singular, hence $G$ is singular. 
\end{example}

\begin{example}
 Consider block graph in Figure \ref{r4fig}(a). We have,
 $$BV=\{\{1,2\}, \{1,3\},  \{1,4\}, \{1,5\}, \{1,6\}, \{1,7\}, \{1,8\}, \{1,9\} \},$$ 
$$CV=\{\{1\}, \{1\}, \{1\}, \{1\}, \{1\}, \{1\}, \{1\}, \{1\}\},$$
$f(1)=2.$ 
\end{example}

\begin{enumerate}
\item 1st call to $GSING(BV, CV)$. As 1 is the only vertex in $\{1\}$ in $CV$, the corresponding vertex-set $\{1,2\}$ in $BV$ is selected as pendant vertex-set. Next, induced graph on vertex set $\{1,2\} \setminus 1$ is investigated. As $w(2)=0,$ $S=1$. Thus
$$BV=\{\{3\},  \{4\}, \{5\}, \{6\}, \{7\}, \{8\}, \{9\} \},$$ 
$$CV=\{ \{\}, \{\}, \{\}, \{\}, \{\}, \{\}, \{\}\},$$
 
\item 2nd call to $GSING(BV, CV)$. As  $CV$ is empty, there is no pendant vertex-set. We need to investigate the induced graphs on vertex-sets in $BV$. But as all are singleton vertices without loops. Hence the block graph is singular. 
\end{enumerate}

\section{Proof of correctness} \label{pc}
Just before $(i)$-th calling of function $GSING(BV, CV)$ in Algorithm \ref{balgo}, the adjacency matrix $A(G)$ is transformed to the matrix $M_{i}$ which is of the form 
\begin{equation*} M_i=
\begin{bmatrix}
A(G_1) & & & \\  &\ddots & &  \\ & & A(G_{i-1}) &  \\  & &  & A(G_{i} \setminus {i}) & j\\ & & & j^T & \alpha & w^T \\ & & & & w & M^{\star}
\end{bmatrix}.
\end{equation*} where each $G_j, \ j=1,\hdots, i-1,$ are some induced subgraph of block $B$  (possibly with loops) whose vertex-set selected as pendant vertex-set during $j$-th calling of $GSING(BV, CV)$. It is clear that if any one of $G_1, \hdots, G_{i-1}$ is singular, then $G$ is singular. Note that if any of $G_1, \hdots, G_{i-1}$ is empty then it is to be considered as nonsingular by convention. Now, $G_i$ is the induced subgraph (possibly with loops) whose vertex-set is selected as pendant vertex-set during $i$-th calling of $GSING(BV, CV)$. Let $i$ be the cut-vertex in the selected pendant vertex-set. Let $\alpha$ be the weight of loop at $i,$ if there is no loop at $i$ then $\alpha=0.$ $M^{\star}$ is the matrix corresponding to induced subgraph (with possible loops) of $G$, on the vertex-set $V(G)\setminus (V(G_1)\cup\hdots \cup V(G_{i}))$. Thus, if $G_1, \hdots, G_{i-1}$ are nonsingular, then $G$ is nonsingular if and only if the following submatrix is nonsingular,

\begin{equation}
\tilde{M}=\begin{bmatrix}
 A(G_{i} \setminus {i}) & j & O\\ j^T & \alpha & w^T \\ O^T & w & M^{\star}
\end{bmatrix}.
\end{equation}

 $A(G_i)$ is of form $J-D$, where $D=diag(d_1, d_2,\hdots,d_{|V(G_i)|})$ is diagonal matrix, $d_i=1-w(i), i=1,2,\hdots, |V(G_i)|.$ 

\begin{enumerate}

\item If more than one diagonal entries of $A(G_i \setminus i)$ are equal to 1, then as, at least two rows and two columns of $A(G_i \setminus i)$ are same, it is obvious that $G$ is singular. 

\item If exactly one diagonal entry of  $A(G_i \setminus i)$ is 1, the using  Theorem \ref{mtype}.1., $A(G_i \setminus i)$ is nonsingular. By Theorem \ref{mtype}.4., $\tilde{M}$ can be transformed to the following matrix,  
\begin{equation*}
\begin{bmatrix}
 A(G_i\setminus i) & o & O\\ o^T & \alpha-1 & w^T \\ O^T & w & M^{\star}
\end{bmatrix}.
\end{equation*}
Thus, $\tilde{M}$ is nonsingular if and only if the following nonsingular matrix is nonsingular,  $$\begin{bmatrix}
  \alpha-1 & w^T \\  w & M^{\star}
\end{bmatrix}.$$

\item If $w(i) \neq 1, i=1,\hdots, |V(G_i)-1|$ and $S=\sum_{i=1}^{|V(G_i)-1|}\frac{1}{1-w(i)}\neq 1$. Then $A(G_i \setminus i)$ is nonsingular. By  Theorem \ref{mtype}.4., $\tilde{M}$ can be transformed to the following matrix,  
\begin{equation*}
\begin{bmatrix}
 A(G_i\setminus i) & o & O\\ o^T & \alpha-\frac{S}{S-1} & w^T \\ O^T & w & M^{\star}
\end{bmatrix}.
\end{equation*}
Thus, $\tilde{M}$ is nonsingular if and only if the following nonsingular matrix is nonsingular,  $$\begin{bmatrix}
  \alpha-\frac{S}{S-1} & w^T \\  w & M^{\star}
\end{bmatrix}.$$

\item  If $A(G_i\setminus i)$ is singular. Then by Theorem \ref{mtype}.3.(b), $A(G_i)$ is nonsingular. The graph corresponding to $\tilde{M}$ is a coalescence of $G_i$ with the induced graph (possibly with loop) of $M^{\star}$.

 As, $A(G_i\setminus i)$ is singular and $A(G_i)$ is nonsingular, by Lemma \ref{supfirst},  $\tilde{M}$ is nonsingular if and only if $M^{\star}$ is nonsingular.
\end{enumerate}

This completes the proof.\\

\textbf{Acknowledgments}\\
The authors are grateful to Dr. Cheng Zheng for his valuable comments and suggestions.

\bibliographystyle{splncs03}
\bibliography{RN} 

\end{document}